\documentclass[a4paper]{article}

\usepackage{graphicx}
\usepackage{color}
\usepackage{stmaryrd}
\usepackage[all]{xy}

\usepackage[numbers]{natbib}
\usepackage{natbibspacing}
\setlength{\bibspacing}{0pt}
\bibliographystyle{abbrv}

\definecolor{myurlcolor}{rgb}{0.1,0.1,0.8}
\definecolor{mylinkcolor}{rgb}{0.05,0.05,0.4}
\usepackage{hyperref}
\hypersetup{colorlinks,
linkcolor=mylinkcolor,
citecolor=mylinkcolor,
urlcolor=mylinkcolor}

\usepackage{tocloft}
\setlength{\cftbeforesecskip}{0ex}


\usepackage{latexsym}
\usepackage{amssymb,amsmath}
\usepackage{mathrsfs}
\usepackage{color}






\newcommand{\bref}[1]{(\ref{#1})}




\newcommand{\twid}[1]{\widetilde{#1}}





\newcommand{\such}{:}














%
%








\def\today{\number\day\space \ifcase\month\or
  January\or February\or March\or April\or May\or June\or
  July\or August\or September\or October\or November\or December\fi
  \space\number\year}


%
%

\usepackage{marginnote}

\usepackage{stmaryrd}

%
%
%

%
%

\usepackage{graphicx}









\newcommand{\reals}{\mathbb{R}}



%
{\begin{array}{c}
\setlength{\unitlength}{1em}}%
{\end{array}}

	

%
{\begin{array}{l}}%
{\end{array}}
{\begin{array}{c}}%
{\end{array}}

%
%



\newcommand{\demph}[1]{\textbf{\textup{#1}}}
%

%














\newcommand{\of}{\circ}

\definecolor{grey}{gray}{0.5}

\newcommand{\slogan}[1]%
{\begin{center}\textit{#1}\end{center}}

\newcommand{\dee}{\,d}

\newcommand{\oppair}[4]{%
\xymatrix@1{%
#1 \ar@<.5ex>[r]^{#3} &{#2}\ar@<.5ex>[l]^{#4}%
}}

\newcommand{\from}{\colon}

\newcommand{\hadjnli}[4]{%
\xymatrix@1{
#1 \ar@<1ex>[r]^-{#3} \ar@{}[r]|-\bot &#2 \ar@<1ex>[l]^-{#4}}}
\newcommand{\hadjnri}[4]{%
\xymatrix@1{
#1 \ar@<1ex>[r]^-{#3} \ar@{}[r]|-\top &#2 \ar@<1ex>[l]^-{#4}}}



%
{\hspace*{-.5em}%
\end{array}}

{\end{array}%
}

{\end{array}%
}

\newcommand{\twocelli}[5]{%
\xymatrix@1{
#1 \ar@/^.8pc/[r]^-{#3} 
\ar@/_.8pc/[r]_-{#4} 
\ar@{}[r]|-{\,{\displaystyle\Downarrow}\,#5} &#2}}




\newcommand{\R}{\reals}

\newcommand{\vc}[1]{\mathbf{#1}} 

\newcommand{\mjpar}[1]{}
\newcommand{\mapar}[1]{}
\newcommand{\mmpar}[1]{}

\newcommand{\p}{\vc{p}}
\newcommand{\w}{\vc{w}}
\newcommand{\relent}[2]{H(#1 \mathbin{\|} #2)}
\newcommand{\Relent}[2]{H\bigl(#1 \mathbin{\big\|} #2\bigr)}
\newcommand{\melent}[2]{I(#1 \mathbin{\|} #2)}
\newcommand{\Melent}[2]{I\bigl(#1 \mathbin{\big\|} #2\bigr)}
\newcommand{\MelEnt}[2]{I\Bigl(#1 \mathbin{\Big\|} #2\Bigr)}

\newcommand{\srelent}[3]{S_{#1}(#2 \mathbin{\|} #3)}
\newcommand{\Srelent}[3]{S_{#1}\bigl(#2 \mathbin{\big\|} #3\bigr)}

\newcommand{\rhow}{R}

\newcommand{\csuch}{\colon}



\usepackage[amsmath,thmmarks]{ntheorem}


\newtheorem{thm}{Theorem}[section]

\newtheorem{lemma}[thm]{Lemma}

\theorembodyfont{\normalfont}

\newtheorem{remark}[thm]{Remark}
\newtheorem{remarks}[thm]{Remarks}

\theoremstyle{nonumberplain}
\theoremsymbol{\ensuremath{\square}}
\qedsymbol{\ensuremath{\square}}

\newtheorem{proof}{Proof}

\newcommand{\theoremtobeproved}{}
\newtheorem{pfoftheorem}{Proof of \theoremtobeproved}
\newenvironment{pfof}[1]
{
\renewcommand{\theoremtobeproved}{#1}
\begin{pfoftheorem}
}
{\end{pfoftheorem}}

\title{A short characterization of relative entropy}
\author{Tom Leinster%
\thanks{School of Mathematics, University of Edinburgh, UK;
  Tom.Leinster@ed.ac.uk.
Key words: relative entropy, Kullback--Leibler divergence, $q$-logarithm,
$q$-logarithmic entropy, Tsallis entropy, Tsallis relative entropy}}
\date{}

\begin{document}

\sloppy
\maketitle

\begin{abstract}\noindent
We prove characterization theorems for relative entropy (also known as
Kullback--Leibler divergence), $q$-logarithmic entropy (also known as
Tsallis entropy), and $q$-logarithmic relative entropy.  All three
have been characterized axiomatically before, but we show that earlier
proofs can be simplified considerably, at the same time relaxing some of
the hypotheses. 
\end{abstract}

\section{Introduction}

The Shannon entropy of a finite probability distribution $\p = (p_1,
\ldots, p_n)$, 
\[
H(\p)
=
\sum_{i \csuch p_i > 0} p_i \log \frac{1}{p_i},
\]
is such an important quantity that many authors have sought short lists of
properties that determine $H$ uniquely.  Many such characterization
theorems have been found, beginning with one in Shannon's seminal paper of
1948 (\cite{ShanMTC}, Theorem~2).  For instance, Faddeev~\cite{Fadd} proved
that up to a constant factor, $H$ is uniquely characterized by symmetry,
continuity, and a certain recursivity property.

Accompanying Shannon entropy is the concept of relative
entropy, defined as follows.  Given probability distributions $\p$ and
$\vc{r}$ on $n$ elements, the \demph{entropy of $\p$ relative to $\vc{r}$}
is
\[
\relent{\p}{\vc{r}}
=
\sum_{i \csuch p_i > 0} p_i \log \frac{p_i}{r_i}
\in [0, \infty].
\]
Relative entropy goes by a multitude of names: Kullback--Leibler
divergence, directed divergence, discrimination information, relative
information, information gain, and so on.  In information theory, it
measures the wastage when a language whose $n$ letters have frequencies $\p
= (p_1, \ldots, p_n)$ is encoded using a system optimized for a different
language with frequencies $\vc{r}$, instead of the system optimized for the
original language.
There are other interpretations in other fields, as the plethora of
names suggests.

Axiomatic characterizations of relative entropy have also been sought and
found.  One such theorem is implicit in work of Kannappan and
Ng~\cite{KaNgMSF}.  It states that up to a constant factor, relative
entropy is uniquely determined by measurability in each of $\p$ and
$\vc{r}$ separately, invariance under permutations of $\{1, \ldots, n\}$,
the vanishing property $\relent{\p}{\p} = 0$, and a certain recursivity
equation.  (Remark~\ref{rmk:relent-history} gives further details.)
Their proof was a tour de force of functional
equations, involving the solution of the functional equation
\begin{equation}
\label{eq:feith-gen}
f(x) + (1 - x) g \biggl( \frac{y}{1 - x} \biggr)
=
h(y) + (1 - y) k \biggl( \frac{x}{1 - y} \biggr)
\end{equation}
in four unknown functions, as well as the four-variable functional equation
\[
F(x, y) + (1 - x) 
F \biggl( \frac{u}{1 - x}, \frac{v}{1 - y} \biggr)
=
F(u, v) + (1 - u)
F \biggl( \frac{x}{1 - u}, \frac{y}{1 - v} \biggr).
\]
We give a much simpler proof, at the same time weakening the measurability
hypothesis.  Our proof involves neither of these equations.  Instead, it
borrows heavily from a categorical characterization of relative entropy by
Baez and Fritz~\cite{BaFr}.  This is the first main result,
Theorem~\ref{thm:relent}.

Shannon entropy is just one member (albeit a special one) of a
one-parameter family of entropies $(S_q)_{q \in \R}$, first investigated by
Havrda and Charv\'at~\cite{HaCh} and often misattributed to Tsallis
(Remark~\ref{rmks:q-history}\bref{rmk:q-history-ent}).  These entropies
$S_q$, and the accompanying relative entropies, are defined as follows.

For $q \in \R$, the \demph{$q$-logarithm} is the function $\ln_q \from (0,
\infty) \to \R$ given by
\[
\ln_q(x) = \int_1^x t^{-q} \dee t.
\]
The \demph{$q$-logarithmic entropy} and \demph{$q$-logarithmic
  relative entropy} are defined by
\begin{align*}
S_q(\p) &
=
\sum_{i \csuch p_i > 0} p_i \ln_q \frac{1}{p_i},        \\
\srelent{q}{\p}{\vc{r}} &
=
-
\sum_{i \csuch p_i > 0} p_i \ln_q \frac{r_i}{p_i},
\end{align*}
for probability distributions $\p$ and $\vc{r}$ on $n$ elements.  When $q =
1$, these reduce to the ordinary Shannon entropy and relative entropy.  

There are several existing theorems characterizing the $q$-logarithmic
entropy for a given $q \neq 1$.  Up until now, the simplest appears to have
been the 1970 result of Dar\'oczy~\cite{DaroGIF}.  We simplify further,
weakening the hypotheses and shortening the proof to just a few lines
(Theorem~\ref{thm:q}).

Finally, we use a similar and equally short argument to characterize the
$q$-logarithmic relative entropies $\srelent{q}{-}{-}$
(Theorem~\ref{thm:q-rel}).

It is remarkable that
when $q \neq 1$, the characterizations of
$q$-logarithmic entropy and $q$-logarithmic relative entropy need no
regularity conditions whatsoever (not even measurability), in contrast to
the theorems for $q = 1$.

The remaining three sections of this paper establish our three theorems in
turn, characterizing first relative entropy, then $q$-logarithmic entropy,
then $q$-logarithmic relative entropy.

\section{Relative entropy}

For $n \geq 1$, write
\[
\Delta_n 
=
\Bigl\{ 
\p = (p_1, \ldots, p_n) \in \R^n
\such 
p_i \geq 0, \ \sum p_i = 1
\Bigr\}
\]
for the set of probability distributions on $\{1, \ldots, n\}$, and write
\[
A_n 
=
\bigl\{
(\p, \vc{r}) \in \Delta_n \times \Delta_n
\such
p_i = 0 \text{ whenever } r_i = 0
\bigr\}.
\]
Evidently, $(\p, \vc{r}) \in A_n$ if
and only if the relative entropy
\[
\relent{\p}{\vc{r}}
=
\sum_{i \csuch p_i > 0} p_i \log \frac{p_i}{r_i}
\]
is finite.  (Viewing $\p$ and $\vc{r}$ as measures on $\{1, \ldots, n\}$,
we have $(\p, \vc{r}) \in A_n$ just when $\p$ is absolutely continuous with
respect to $\vc{r}$.)  

We will characterize the sequence of functions
\begin{equation*}
\bigl(
\relent{-}{-} \from A_n \to \R
\bigr)_{n \geq 1}
\end{equation*}
uniquely up to a constant factor.
It is easy to check that this sequence 
has the following four properties, as does any scalar multiple
$c\relent{-}{-}$ (for $c \in \R$).

\begin{description}
\item[Measurability in the second argument] 
For each $n \geq 1$ and $\vc{p} \in \Delta_n$, the function
\[
\begin{array}{ccc}
\{ \vc{r} \in \Delta_n \such (\p, \vc{r}) \in A_n \}        &
\to     &
\R      \\
\vc{r}  &
\mapsto &
\relent{\p}{\vc{r}}
\end{array}
\]
is Lebesgue measurable.

\item[Symmetry]
For each $n \geq 1$, $(\p, \vc{r}) \in A_n$ and permutation $\sigma$ of
$\{1, \ldots, n\}$, 
\begin{equation}
\label{eq:relent-sym}
\relent{\p}{\vc{r}}
=
\relent{\p\sigma}{\vc{r}\sigma},
\end{equation}
where $\p\sigma = (p_{\sigma(1)}, \ldots, p_{\sigma(n)})$.

\item[Vanishing] 
$\relent{\p}{\p} = 0$ for all $n \geq 1$ and $\p \in \Delta_n$.

\item[Chain rule]
To state this, we need some notation.  Given $n, k_1, \ldots, k_n \geq 1$ and
$\vc{w} \in \Delta_n, \p^1 \in \Delta_{k_1}, \ldots, \p^n \in
\Delta_{k_n}$, and writing $\p^i = (p^i_1, \ldots, p^i_{k_i})$, define 
\begin{align*}
\vc{w} \of (\p^1, \ldots, \p^n)
&
=
(w_1 p^1_1, \ldots, w_1 p^1_{k_1},
\ \ldots, \ 
w_n p^n_1, \ldots, w_n p^n_{k_n})       
\in
\Delta_{k_1 + \cdots + k_n}.
\end{align*}
The \demph{chain rule} for relative entropy is that 
\begin{equation}
\label{eq:relent-chn}
\Relent{\vc{w} \of (\p^1, \ldots, \p^n)}%
{\twid{\vc{w}} \of (\twid{\p}^1, \ldots, \twid{\p}^n)}  
=
\relent{\vc{w}}{\twid{\vc{w}}} 
+ \sum_{i = 1}^n w_i \relent{\p^i}{\twid{\p}^i} 
\end{equation}
whenever $(\vc{w}, \twid{\vc{w}}) \in A_n$ and $(\p^i, \twid{\p}^i) \in
A_{k_i}$.  
(Under these hypotheses, the pair of
distributions on the left-hand side 
belongs to $A_{k_1 + \cdots + k_n}$.)
\end{description}

\begin{thm}
\label{thm:relent}
Let $\bigl( \melent{-}{-} \from A_n \to \R \bigr)_{n \geq 1}$ be a sequence
of functions.  The following are equivalent:
\begin{enumerate}
\item 
\label{part:relent-condns}
$\melent{-}{-}$ satisfies the four properties above: measurability in the
  second argument, symmetry, vanishing, and the chain rule;

\item
\label{part:relent-form}
$\melent{-}{-} = c\relent{-}{-}$ for some $c \in \R$.
\end{enumerate}
\end{thm}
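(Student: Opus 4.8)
The implication \bref{part:relent-form} $\Rightarrow$ \bref{part:relent-condns} is exactly the routine check sketched just above, so all the work is in \bref{part:relent-condns} $\Rightarrow$ \bref{part:relent-form}. Write $I := \melent{-}{-}$ for a sequence with the four listed properties. Following the categorical argument of Baez and Fritz~\cite{BaFr}, the plan is to extract from the chain rule \bref{eq:relent-chn} a \emph{logarithmic} functional equation obeyed by a \emph{single measurable} function, and then to finish with the classical fact that a Lebesgue-measurable solution of $\varphi(xy) = \varphi(x) + \varphi(y)$ is a scalar multiple of $\log$.

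\emph{Step 1: consequences of the chain rule.} First I would record two facts. (a) Adjoining a coordinate equal to $0$ to both $\p$ and $\vc{r}$ does not change $\melent{\p}{\vc{r}}$: apply \bref{eq:relent-chn} with all but one factor trivial, the remaining one being $(1,0)$ in both arguments, and use vanishing. (b) Writing $\vc{u}_k$ for the uniform distribution on $k$ points, if $\p = (a_1/N, \ldots, a_n/N)$ with $a_1, \ldots, a_n \in \nat$ and $N = a_1 + \cdots + a_n$, then
\[
\melent{\p}{\vc{r}} \;=\; \melent{\vc{u}_N}{\,\vc{r} \of (\vc{u}_{a_1}, \ldots, \vc{u}_{a_n})\,},
\]
because $\vc{u}_N = \p \of (\vc{u}_{a_1}, \ldots, \vc{u}_{a_n})$, so applying \bref{eq:relent-chn} with the \emph{same} refinement on both sides produces only correction terms $\melent{\vc{u}_{a_i}}{\vc{u}_{a_i}}$, which vanish. (I suppress the routine bookkeeping for coordinates of $\p$ that are $0$.) Hence $I$ on pairs with rational first argument is completely determined by the single-variable functions
\[
\Phi_N(\vc{q}) \;:=\; \melent{\vc{u}_N}{\vc{q}} \qquad (\vc{q} \in \Delta_N \text{ of full support}),
\]
and each $\Phi_N$ is Lebesgue measurable, since that is precisely the measurability hypothesis applied with $\p = \vc{u}_N$.

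\emph{Step 2: the logarithmic equation and its solution.} Feeding the identity $\vc{u}_{nk} = \vc{u}_n \of (\vc{u}_k, \ldots, \vc{u}_k)$ into the left-hand side of \bref{eq:relent-chn}, and a product distribution $\vc{q} = \vc{a} \otimes \vc{b}$ (with $\vc{a} \in \Delta_n$, $\vc{b} \in \Delta_k$ of full support; its $i$-th block is $\vc{b}$, with block-sum $a_i$) into the right-hand side, the chain rule collapses at once to
\[
\Phi_{nk}(\vc{a} \otimes \vc{b}) \;=\; \Phi_n(\vc{a}) + \Phi_k(\vc{b}).
\]
This is a logarithmic Cauchy equation in disguise. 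Combined with symmetry, with $\Phi_N(\vc{u}_N) = 0$ (vanishing), with the additional relations got by plugging \emph{non-uniform} refinements of $\vc{u}_N$ into \bref{eq:relent-chn} (an induction on the number of points, which links different values of $N$), and --- decisively --- with the measurability of each $\Phi_N$, a standard argument forces $\Phi_N(\vc{q}) = c\,\relent{\vc{u}_N}{\vc{q}}$ for a single constant $c \in \R$; measurability is exactly what excludes the pathological additive solutions of the Cauchy equation. By Step 1, $\melent{\p}{\vc{r}} = c\,\relent{\p}{\vc{r}}$ for every \emph{rational} $\p$ and every $\vc{r}$.

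\emph{Step 3: removing rationality --- the hard part.} Replacing $\melent{-}{-}$ by $\melent{-}{-} - c\,\relent{-}{-}$ --- which again satisfies the four properties, and now also vanishes whenever $\p$ is rational --- reduces us to proving that such a sequence is identically zero. Iterating \bref{eq:relent-chn} to split coordinates off one at a time expresses a general $\melent{\p}{\vc{r}}$ as $\sum_i q_i\,g(p_i/q_i,\, r_i/s_i)$, with $q_i = \sum_{j \ge i} p_j$, $s_i = \sum_{j \ge i} r_j$, and $g(p,r) := \melent{(p,1-p)}{(r,1-r)}$; so it is enough to show $g \equiv 0$. Here $g$ satisfies $g(p,r) = g(1-p,1-r)$, vanishes on the diagonal and whenever $p$ is rational, is measurable in $r$ for each fixed $p$, and --- by evaluating a three-point relative entropy in two coordinate orders --- obeys
\[
g(p_1,r_1) + (1-p_1)\,g\!\Bigl(\tfrac{p_2}{1-p_1},\tfrac{r_2}{1-r_1}\Bigr)
\;=\;
g(p_2,r_2) + (1-p_2)\,g\!\Bigl(\tfrac{p_1}{1-p_2},\tfrac{r_1}{1-r_2}\Bigr).
\]
The crux of the proof --- and the step I expect to be by far the most delicate --- is to extract from this relation, using only the measurability in the \emph{second} variable, that $g(p,r) = 0$ for irrational $p$ too; this is where the Baez--Fritz ideas genuinely earn their keep. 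Everything preceding it is a short, mechanical unwinding of the chain rule, vanishing, and symmetry.
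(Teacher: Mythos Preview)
Your outline contains a genuine gap, and it is precisely the gap the paper's argument is designed to close. Your Step~3 ends at the two--variable recursivity equation
\[
g(p_1,r_1) + (1-p_1)\,g\!\Bigl(\tfrac{p_2}{1-p_1},\tfrac{r_2}{1-r_1}\Bigr)
=
g(p_2,r_2) + (1-p_2)\,g\!\Bigl(\tfrac{p_1}{1-p_2},\tfrac{r_1}{1-r_2}\Bigr),
\]
together with $g(p,r)=0$ for rational $p$, and you then assert that ``the Baez--Fritz ideas genuinely earn their keep'' without saying how. But this is exactly the Kannappan--Ng functional equation that the present paper explicitly \emph{avoids}; their solution of it is the ``tour de force'' referred to in the Introduction, and it used measurability in \emph{both} arguments, whereas you have regularity only in $r$. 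There is no evident mechanism in your outline for propagating the vanishing of $g(p,r)$ from rational $p$ to irrational $p$: the functional equation mixes the $p$- and $r$-variables nonlinearly, and no continuity or measurability in $p$ is available. So as written, Step~3 is not a proof but a restatement of the problem. (Step~2 is also under-argued: the passage from the product law $\Phi_{nk}(\vc{a}\otimes\vc{b}) = \Phi_n(\vc{a}) + \Phi_k(\vc{b})$ to the exact form $\Phi_N = c\,\relent{\vc{u}_N}{-}$ is not a one-line ``standard argument''.)

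The paper's route is quite different and bypasses both difficulties. Instead of the multivariable $\Phi_N$, it isolates the single one-variable function $L(\alpha) = \melent{(1,0)}{(\alpha,1-\alpha)}$, shows by a three-point computation that $L(\alpha\beta)=L(\alpha)+L(\beta)$, and uses measurability \emph{only here} to get $L = -c\log$. The decisive step---the actual Baez--Fritz idea you allude to but never deploy---is Lemma~\ref{lemma:bf-full-supp}: for \emph{any} full-support $\p$ (rational or not), one evaluates
\[
\Melent{(p_1,\ldots,p_n,0,\ldots,0)}{(\alpha p_1,\ldots,\alpha p_n,\, r_1-\alpha p_1,\ldots,r_n-\alpha p_n)}
\]
in two ways (once via Lemma~\ref{lemma:zeros} collapsing to $L(\alpha)$, once via the chain rule grouping coordinates in pairs), and the comparison yields $\melent{\p}{\vc{r}} = c\,\relent{\p}{\vc{r}}$ directly. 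No rational/irrational dichotomy, no functional equation in $g$, no induction on $N$. That trick is what you are missing; once you have it, your Steps~1--3 become unnecessary.
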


We have just noted that~\bref{part:relent-form}
implies~\bref{part:relent-condns}.  We now embark on the proof of the
converse.  
For the rest of this section, let $\bigl( \melent{-}{-} \from
A_n \to \R \bigr)_{n \geq 1}$ be a sequence of functions satisfying the
four conditions.
Define a function $L \from (0, 1] \to \R$ by
\[
L(\alpha) = \Melent{(1, 0)}{(\alpha, 1 - \alpha)}.
\]
The idea is that if $\melent{-}{-} = \relent{-}{-}$ then $L = -\log$.  We
will show that in any case, $L$ is a scalar multiple of $\log$.

\begin{lemma}
\label{lemma:zeros}
Let $(\p, \vc{r}) \in A_n$ with $p_{k + 1} = \cdots = p_n = 0$, where $1
\leq k \leq n$.  Then $r_1 + \cdots + r_k > 0$ and
\[
\melent{\p}{\vc{r}} 
=
L(r_1 + \cdots + r_k) 
+
\melent{\p'}{\vc{r}'},
\]
where 
\[
\p' = (p_1, \ldots, p_k),
\qquad
\vc{r}' = \frac{(r_1, \ldots, r_k)}{r_1 + \cdots + r_k}.
\]
\end{lemma}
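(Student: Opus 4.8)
The plan is to realise the given pair $(\p, \vc{r})$ as the output of the chain rule~\bref{eq:relent-chn} applied to a two-block decomposition of $\{1, \ldots, n\}$, namely $\{1, \ldots, k\} \sqcup \{k+1, \ldots, n\}$, arranged so that the only surviving term on the right is $\melent{(1,0)}{(s, 1-s)} = L(s)$ with $s = r_1 + \cdots + r_k$. Only the chain rule and the vanishing property will be needed; symmetry and measurability play no role here.

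First I would settle the positivity claim and the degenerate case. Since $\p \in \Delta_n$ is supported on $\{1, \ldots, k\}$, some $p_i > 0$ with $i \leq k$; because $(\p, \vc{r}) \in A_n$, this forces $r_i > 0$, so $s = r_1 + \cdots + r_k \geq r_i > 0$. If $k = n$, then $\p' = \p$, $\vc{r}' = \vc{r}$, $s = 1$, and the asserted identity reads $\melent{\p}{\vc{r}} = L(1) + \melent{\p}{\vc{r}}$; but $L(1) = \Melent{(1,0)}{(1,0)} = 0$ by vanishing, so we are done. Hence assume $k < n$ from now on.

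Next I apply the chain rule with outer index $2$, block sizes $k_1 = k$ and $k_2 = n-k$, and the following data: $\vc{w} = (1,0) \in \Delta_2$ and $\twid{\vc{w}} = (s, 1-s) \in \Delta_2$; $\p^1 = \p' = (p_1, \ldots, p_k) \in \Delta_k$ (a genuine distribution, as $\sum_{i \leq k} p_i = 1$) and $\twid{\p}^1 = \vc{r}' = (r_1, \ldots, r_k)/s \in \Delta_k$; and for the second block $\twid{\p}^2 = (r_{k+1}, \ldots, r_n)/(1-s) \in \Delta_{n-k}$ when $s < 1$ (valid since these entries sum to $1-s$) and $\twid{\p}^2$ an arbitrary element of $\Delta_{n-k}$ when $s = 1$, with $\p^2 = \twid{\p}^2$ in either case. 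I would then verify the hypotheses of~\bref{eq:relent-chn}: $(\vc{w}, \twid{\vc{w}}) \in A_2$ (the only requirement, $w_2 = 0$ whenever $\twid{w}_2 = 0$, holds); $(\p^1, \twid{\p}^1) \in A_k$, immediate from $(\p, \vc{r}) \in A_n$; and $(\p^2, \twid{\p}^2) \in A_{n-k}$, trivial since the two coincide. A direct check using $w_2 = 0$ gives $\vc{w} \of (\p^1, \p^2) = \p$, and the choice of the $\twid{\p}^i$ (with the convention for $s=1$, where $r_{k+1} = \cdots = r_n = 0$) gives $\twid{\vc{w}} \of (\twid{\p}^1, \twid{\p}^2) = \vc{r}$. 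The chain rule then yields
\[
\melent{\p}{\vc{r}} = \melent{\vc{w}}{\twid{\vc{w}}} + w_1 \melent{\p^1}{\twid{\p}^1} + w_2 \melent{\p^2}{\twid{\p}^2} = L(s) + \melent{\p'}{\vc{r}'},
\]
since $\melent{\vc{w}}{\twid{\vc{w}}} = \Melent{(1,0)}{(s, 1-s)} = L(s)$, $w_1 = 1$, and $w_2 = 0$.

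I do not expect a genuine obstacle: the argument is essentially a single application of the chain rule. The only thing needing care is the bookkeeping at the boundary, where the putative denominator $1-s$ vanishes ($s = 1$, and relatedly $k = n$); I would handle this by treating $k = n$ separately up front and by defining the second-block datum $\twid{\p}^2$ by cases, so that these degeneracies never interfere with the main computation.
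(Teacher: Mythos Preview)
Your proof is correct and follows essentially the same route as the paper's own proof: both handle $k = n$ separately via $L(1) = 0$, then for $k < n$ apply the chain rule with outer pair $\bigl((1,0),(s,1-s)\bigr)$, first-block pair $(\p',\vc{r}')$, and second-block pair chosen equal (to the normalization of $(r_{k+1},\ldots,r_n)$ when $s<1$, arbitrary otherwise). The only difference is that you verify the hypotheses of the chain rule and the boundary case $s=1$ a bit more explicitly than the paper does.
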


\begin{proof}
The case $k = n$ reduces to the statement that $L(1) = 0$, which follows
from the vanishing property.  Suppose, then, that $k < n$.

Since $\p$ is a probability distribution with $p_i = 0$ for all $i > k$,
there is some $i \leq k$ such that $p_i > 0$, and then $r_i > 0$ as
$(\p, \vc{r}) \in A_n$.  Hence $r_1 + \cdots + r_k > 0$.  Let $\vc{r}'' \in
\Delta_{n - k}$ be the normalization of $(r_{k + 1}, \ldots, r_n)$ if $r_{k
  + 1} + \cdots + r_n > 0$, or choose $\vc{r}''$ arbitrarily in $\Delta_{n
  - k}$ otherwise (which is possible since $k < n$).  Then 
\[
\melent{\p}{\vc{r}}     
=
\Melent{(1, 0) \of (\p', \vc{r}'')}
{(r_1 + \cdots + r_k, \, r_{k + 1} + \cdots + r_n) \of 
(\vc{r}', \vc{r}'')}.
\]
The result now follows from the chain rule.
\end{proof}

\begin{lemma}
\label{lemma:two-add}
$L(\alpha\beta) = L(\alpha) + L(\beta)$ for all $\alpha, \beta \in (0,
1]$. 
\end{lemma}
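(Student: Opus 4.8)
The plan is to pin down a single value $\melent{\cdot}{\cdot}$ by computing it in two ways: once via the chain rule, and once via Lemma~\ref{lemma:zeros}.

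First I would apply the chain rule~\bref{eq:relent-chn} with two outer blocks, of sizes $k_1 = 2$ and $k_2 = 1$, taking
\[
\w = (1, 0), \quad \twid{\w} = (\alpha, 1 - \alpha), \quad
\p^1 = (1, 0), \quad \twid{\p}^1 = (\beta, 1 - \beta), \quad
\p^2 = \twid{\p}^2 = (1) \in \Delta_1 .
\]
All the pairs involved lie in the relevant sets $A_m$: none of $\twid{\w}, \twid{\p}^1, \twid{\p}^2$ has a zero entry unless $\alpha = 1$ or $\beta = 1$, in which case the corresponding entry of $\w$ or $\p^1$ is also $0$. Since $w_1 = 1$ and $w_2 = 0$, the $\p^2$-block of $\w \of (\p^1, \p^2)$ is annihilated, so $\w \of (\p^1, \p^2) = (1, 0, 0)$, while $\twid{\w} \of (\twid{\p}^1, \twid{\p}^2) = (\alpha\beta, \alpha(1 - \beta), 1 - \alpha)$. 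The $w_2$-term of the chain rule drops out, leaving
\[
\Melent{(1, 0, 0)}{(\alpha\beta, \, \alpha(1 - \beta), \, 1 - \alpha)}
= \melent{(1, 0)}{(\alpha, 1 - \alpha)} + \melent{(1, 0)}{(\beta, 1 - \beta)}
= L(\alpha) + L(\beta).
\]

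Next I would evaluate the same left-hand side using Lemma~\ref{lemma:zeros} with $n = 3$ and $k = 1$, applied to $\p = (1, 0, 0)$ and $\vc{r} = (\alpha\beta, \alpha(1 - \beta), 1 - \alpha)$. Since $\alpha, \beta \in (0, 1]$ we have $r_1 = \alpha\beta > 0$, so the hypotheses are met, and the lemma gives
\[
\Melent{(1, 0, 0)}{(\alpha\beta, \, \alpha(1 - \beta), \, 1 - \alpha)}
= L(\alpha\beta) + \melent{(1)}{(1)}
= L(\alpha\beta),
\]
the last equality being the vanishing property for $(1) \in \Delta_1$. Comparing the two evaluations yields $L(\alpha\beta) = L(\alpha) + L(\beta)$, as required.

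I do not expect a real obstacle here; the whole argument is one well-chosen instance of the chain rule combined with Lemma~\ref{lemma:zeros}. The only points needing care are checking that the distributions involved lie in the appropriate sets $A_m$ (so that the chain rule and Lemma~\ref{lemma:zeros} genuinely apply), especially in the boundary cases $\alpha = 1$ or $\beta = 1$, and keeping track of which coordinates the block operation $\w \of (-, -)$ sends to $0$.
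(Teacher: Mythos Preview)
Your proof is correct and follows essentially the same approach as the paper: both evaluate $\Melent{(1,0,0)}{(\alpha\beta,\,\alpha(1-\beta),\,1-\alpha)}$ in two ways. The only cosmetic difference is that where you apply the chain rule directly to obtain $L(\alpha)+L(\beta)$, the paper instead invokes Lemma~\ref{lemma:zeros} a second time (with $k=2$), which amounts to the same computation since that lemma is itself a packaged instance of the chain rule.
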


\begin{proof}
We evaluate the real number
\[
x := 
\Melent{(1, 0, 0)}{(\alpha\beta, \alpha(1 - \beta), 1 - \alpha)}
\]
in two ways.  By Lemma~\ref{lemma:zeros} with $k = 1$ and
the vanishing property,
\[
x 
= 
L(\alpha\beta) + \Melent{(1)}{(1)}
=
L(\alpha\beta),
\]
where $(1)$ is the unique element of $\Delta_1$.  But also, by
Lemma~\ref{lemma:zeros} with $k = 2$,
\[
x 
=
L(\alpha) + \Melent{(1, 0)}{(\beta, 1 - \beta)}
=
L(\alpha) + L(\beta).
\]
Comparing the two expressions for $x$ gives the result.
%
%
\end{proof}

\begin{lemma}
\label{lemma:rel-two-log}
There is some $c \in \R$ such that $L(\alpha) = -c\log\alpha$ for all
$\alpha \in (0, 1]$.
\end{lemma}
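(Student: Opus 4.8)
The goal is to promote Lemma~\ref{lemma:two-add}, which says that $L \from (0,1] \to \R$ is additive with respect to multiplication, together with the measurability hypothesis, to the conclusion that $L$ is a constant multiple of $\log$. The natural move is to transport the problem to the classical Cauchy functional equation. Define $g \from [0, \infty) \to \R$ by $g(t) = L(e^{-t})$; then for $s, t \geq 0$ we have $e^{-(s+t)} = e^{-s} e^{-t} \in (0,1]$, so Lemma~\ref{lemma:two-add} gives $g(s + t) = g(s) + g(t)$. Thus $g$ is an additive function on $[0, \infty)$. It is a standard fact that an additive function on $\R$ (or on $[0,\infty)$, extending to $\R$ by $g(-t) = -g(t)$) which is Lebesgue measurable must be linear, i.e.\ $g(t) = bt$ for some constant $b$; equivalently $L(\alpha) = -c \log \alpha$ with $c = -b$.

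**Verifying the measurability input.** The one thing that genuinely needs checking is that $g$ — equivalently $L$ — is Lebesgue measurable, since that is the hypothesis that rules out the pathological (non-measurable, graph-dense) solutions of Cauchy's equation. Here I would invoke the measurability-in-the-second-argument property of $\melent{-}{-}$: take $n = 2$ and $\p = (1, 0) \in \Delta_2$. The set $\{\vc{r} \in \Delta_2 \such ((1,0), \vc{r}) \in A_2\}$ consists of those $(\alpha, 1-\alpha)$ with $\alpha > 0$ (the condition is that $p_i = 0$ whenever $r_i = 0$, and since $p_2 = 0$ already, we only need $r_1 = \alpha > 0$). Parametrising this set by $\alpha \in (0, 1]$ via the (bi-measurable) map $\alpha \mapsto (\alpha, 1 - \alpha)$, the hypothesis says precisely that $\alpha \mapsto \Melent{(1,0)}{(\alpha, 1-\alpha)} = L(\alpha)$ is Lebesgue measurable on $(0,1]$. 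Since $t \mapsto e^{-t}$ is continuous, $g = L \circ \exp(-\,\cdot\,)$ is measurable on $[0, \infty)$.

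**Finishing via the Cauchy equation.** With $g$ additive and measurable on $[0,\infty)$, extend it to an odd additive function on all of $\R$, which remains measurable; then the classical theorem (a measurable additive function $\R \to \R$ is of the form $t \mapsto bt$) — proved, e.g., by the Fréchet/Sierpiński argument that a measurable additive function is bounded on a set of positive measure, hence bounded on an interval, hence continuous, hence linear, or alternatively by Steinhaus's theorem — gives $g(t) = bt$. Setting $c = -b$ and substituting $t = -\log\alpha$ yields $L(\alpha) = g(-\log\alpha) = -b\log\alpha = -c\log\alpha$ for all $\alpha \in (0,1]$, which is exactly the claim.

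**Main obstacle.** There is no real obstacle here; the lemma is a routine application of a well-known rigidity result for Cauchy's functional equation. The only point requiring a little care is the bookkeeping in the previous paragraph: confirming that the measurability hypothesis, as stated for the map $\vc{r} \mapsto \melent{\p}{\vc{r}}$, does indeed specialise to measurability of $L$, and that composing with the continuous reparametrisations $\alpha \mapsto (\alpha, 1-\alpha)$ and $t \mapsto e^{-t}$ preserves measurability — all of which is immediate. One might alternatively avoid extending to $\R$ by citing directly the version of the theorem for additive functions on $[0,\infty)$, but the extension is so trivial that it hardly matters.
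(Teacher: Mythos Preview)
Your proof is essentially identical to the paper's: define $f(t) = L(e^{-t})$, observe it is additive by Lemma~\ref{lemma:two-add} and measurable by the hypothesis on $\melent{-}{-}$, invoke the classical result that measurable additive functions are linear, and unwind. One trivial slip: with $g(t) = bt$ you want $c = b$ (not $c = -b$), since $L(\alpha) = g(-\log\alpha) = -b\log\alpha$ already has the desired form $-c\log\alpha$ when $c = b$.
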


\begin{proof}
Define $f \from [0, \infty) \to \R$ by $f(t) = L(e^{-t})$.  By
Lemma~\ref{lemma:two-add}, $f(t + u) = f(t) + f(u)$ for all $t, u \in [0,
\infty)$.  Also, $f$ is measurable, since $L$ is.  It is
well-known~\cite{BanaSEF} that these conditions force $f(t) = ct$ for
some constant $c$, giving $L(\alpha) = -c\log\alpha$.
\end{proof}

Our next lemma is an adaptation of the most ingenious part of Baez and
Fritz's argument (\cite{BaFr}, Lemma~4.2).

\begin{lemma}
\label{lemma:bf-full-supp}
Let $(\p, \vc{r}) \in A_n$ with $p_i > 0$ for all $i \in \{1, \ldots, n\}$.
Then $\melent{\p}{\vc{r}} = c\relent{\p}{\vc{r}}$.
\end{lemma}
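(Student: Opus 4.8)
The plan is to reduce the statement to the case $n = 2$ and then, by a single clever double computation, extract enough of a functional equation for the two‑point relative entropy to pin it down, using only the chain rule, the vanishing property, and the already‑established fact that $L = -c\log$.

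First I would peel off coordinates. Let $(\p, \vc{r}) \in A_n$ with $\p$ fully supported, so that $\vc{r}$ is fully supported too. For $n \geq 2$ we have $0 < p_n < 1$, and with $\vc{w} = (1 - p_n, p_n)$, $\twid{\vc{w}} = (1 - r_n, r_n)$, $\p' = (p_1, \ldots, p_{n-1})/(1 - p_n)$ and $\vc{r}' = (r_1, \ldots, r_{n-1})/(1 - r_n)$ we have $\p = \vc{w} \of (\p', (1))$ and $\vc{r} = \twid{\vc{w}} \of (\vc{r}', (1))$, with $(\vc{w}, \twid{\vc{w}}) \in A_2$ and $(\p', \vc{r}') \in A_{n-1}$ both fully supported. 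The chain rule and vanishing give $\melent{\p}{\vc{r}} = \melent{\vc{w}}{\twid{\vc{w}}} + (1 - p_n)\melent{\p'}{\vc{r}'}$, and since $\relent{-}{-}$ obeys the very same chain rule, an induction on $n$ (trivial at $n = 1$) reduces the lemma to the case $n = 2$.

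For $n = 2$, write $\phi(p, \beta) = \melent{(p,1-p)}{(\beta, 1-\beta)}$ for $p, \beta \in (0,1)$; the goal is $\phi(p,\beta) = c\relent{(p,1-p)}{(\beta,1-\beta)}$. The key move — the part adapted from Baez and Fritz — is to evaluate $\melent{(p, 1-p, 0)}{(r_1, r_2, r_3)}$ two ways, for arbitrary $r_1, r_2, r_3 > 0$ with $r_1 + r_2 + r_3 = 1$. On the one hand, Lemma~\ref{lemma:zeros} with $k = 2$ gives $L(r_1 + r_2) + \phi\bigl(p, \tfrac{r_1}{r_1 + r_2}\bigr)$. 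On the other hand, writing $(p, 1-p, 0) = (p,1-p) \of ((1), (1,0))$ and $(r_1, r_2, r_3) = (r_1, r_2 + r_3) \of \bigl((1), \tfrac{(r_2, r_3)}{r_2 + r_3}\bigr)$, the chain rule together with vanishing and the definition of $L$ gives $\phi(p, r_1) + (1 - p) L\bigl(\tfrac{r_2}{r_2 + r_3}\bigr)$. Equating these and substituting $L = -c\log$ (Lemma~\ref{lemma:rel-two-log}) yields a functional identity for $\phi(p, \cdot)$.

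Finally I would solve it. Parametrising $r_1 = \alpha t$, $r_2 = (1 - \alpha)t$, $r_3 = 1 - t$ with $\alpha, t \in (0,1)$ turns the identity into a relation between $\phi(p, \alpha)$ and $\phi(p, \alpha t)$; expanding the logarithms, it rearranges to say that $\phi(p, \beta) + cp\log\beta + c(1 - p)\log(1 - \beta)$ takes the same value at $\beta = \alpha$ and $\beta = \alpha t$. Since every pair $0 < \beta_1 < \beta_2 < 1$ arises this way, that expression is independent of $\beta$, so $\phi(p, \beta) = g(p) - cp\log\beta - c(1 - p)\log(1 - \beta)$ for some function $g$ of $p$ alone. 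Putting $\beta = p$ and using vanishing, $\phi(p, p) = 0$, forces $g(p) = cp\log p + c(1 - p)\log(1 - p)$, whence $\phi(p, \beta) = cp\log\tfrac{p}{\beta} + c(1 - p)\log\tfrac{1 - p}{1 - \beta} = c\relent{(p,1-p)}{(\beta,1-\beta)}$, completing the $n = 2$ case and hence the lemma. The main obstacle is exactly the double computation of $\melent{(p,1-p,0)}{(r_1,r_2,r_3)}$: one has to see that padding $(p,1-p)$ with a zero and reading the resulting three‑point quantity through both Lemma~\ref{lemma:zeros} and the chain rule produces precisely the functional equation needed. Once that is in hand, the remainder is bookkeeping with logarithms plus one appeal to vanishing — and, worth noting, no regularity hypothesis enters here at all, since all the analysis has already been spent in deriving Lemma~\ref{lemma:rel-two-log}.
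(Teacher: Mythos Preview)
Your argument is correct, but it takes a genuinely different route from the paper's. The paper does not reduce to $n = 2$; instead it treats all $n$ at once with a single $2n$-element double computation: choose $\alpha \in (0,1]$ with $r_i - \alpha p_i \geq 0$ for every $i$, and evaluate
\[
\Melent{(p_1,\ldots,p_n,0,\ldots,0)}{(\alpha p_1,\ldots,\alpha p_n,\,r_1-\alpha p_1,\ldots,r_n-\alpha p_n)}
\]
first via Lemma~\ref{lemma:zeros} (giving $L(\alpha)$, since the normalized first block is $\p$ itself and vanishing applies), and second by permuting to $(p_1,0,\ldots,p_n,0)$ and applying the chain rule with $\p \of ((1,0),\ldots,(1,0))$ against $\vc{r} \of ((\alpha p_i/r_i,\,1-\alpha p_i/r_i))_i$, which yields $\melent{\p}{\vc{r}} + \sum_i p_i L(\alpha p_i/r_i)$. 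Equating the two gives the result immediately, with no functional equation to solve. Your approach trades this single clever choice of $\alpha$ for an induction plus a one-parameter functional equation in $\beta$; it is a bit longer but has the incidental virtue that the symmetry axiom is never invoked in this lemma (the paper uses it to interleave the zeros), and it keeps all auxiliary computations in $\Delta_3$ rather than $\Delta_{2n}$.
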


\begin{proof}
The hypotheses imply that $r_i > 0$ for all $i$.  We can
therefore choose some $\alpha \in (0, 1]$ such that $r_i - \alpha p_i \geq 0$
  for all $i$.
We will compute the number
\[
x
:=
\Melent{(p_1, \ldots, p_n, \underbrace{0, \ldots, 0}_n)}
{(\alpha p_1, \ldots, \alpha p_n, 
r_1 - \alpha p_1, \ldots, r_n - \alpha p_n)} 
\]
in two ways.  
First, by Lemma~\ref{lemma:zeros}, Lemma~\ref{lemma:rel-two-log},
and the vanishing property,
\[
x 
=
L(\alpha) + \melent{\p}{\p}
=
-c\log\alpha.
\]
Second, by symmetry, the chain rule, and Lemma~\ref{lemma:rel-two-log},
\begin{align*}
x       &
=
\Melent{(p_1, 0, \ldots, p_n, 0)}
{(\alpha p_1, r_1 - \alpha p_1, \ldots, p_n, r_n - \alpha p_n)} \\
&
=
\MelEnt{\p \of \bigl((1, 0), \ldots, (1, 0)\bigr)}
{\vc{r} \of 
\bigl(
\bigl(\alpha \tfrac{p_1}{r_1}, 1 - \alpha \tfrac{p_1}{r_1}\bigr), \ldots, 
\bigl(\alpha \tfrac{p_n}{r_n}, 1 - \alpha \tfrac{p_n}{r_n}\bigr)
\bigr)} \\
&
=
\melent{\p}{\vc{r}} + 
\sum_{i = 1}^n p_i L\bigl(\alpha \tfrac{p_i}{r_i}\bigr) \\
&
=
\melent{\p}{\vc{r}} - c\log \alpha - c\relent{\p}{\vc{r}}.
\end{align*}
Comparing the two expressions for $x$ gives the result.
\end{proof}

We have now proved that $\melent{\p}{\vc{r}} = c\relent{\p}{\vc{r}}$ when
$\p$ has full support.  It only remains to prove it for arbitrary $\p$. 

\begin{pfof}{Theorem~\ref{thm:relent}}
Let $(\p, \vc{r}) \in A_n$.  By symmetry, we can assume that $p_1, \ldots,
p_k > 0$ and $p_{k + 1} = \cdots = p_n = 0$ for some $k \in \{1, \ldots,
n\}$.  Writing $\rhow = r_1 + \cdots + r_k$, we have $R > 0$ since $(\p,
\vc{r}) \in A_n$, and t
\[
\melent{\p}{\vc{r}}     
=
L(\rhow) 
+ 
\Melent{(p_1, \ldots, p_k)}{\tfrac{1}{\rhow}(r_1, \ldots, r_k)}
\]
by Lemma~\ref{lemma:zeros}.  Hence by Lemmas~\ref{lemma:rel-two-log}
and~\ref{lemma:bf-full-supp},
\[
\melent{\p}{\vc{r}}     
=
-c\log R
+ c\Relent{(p_1, \ldots, p_k)}{\tfrac{1}{\rhow}(r_1, \ldots, r_k)}.
\]
But by the same argument applied to $cH$ in place of $I$ (or by direct
calculation), we also have
\[
c\relent{\p}{\vc{r}}     
=
-c\log R
+ c\Relent{(p_1, \ldots, p_k)}{\tfrac{1}{\rhow}(r_1, \ldots, r_k)}.
\]
The result follows.
\end{pfof}

\begin{remarks}
\label{rmks:relent-condns}
\begin{enumerate}
\item 
\label{rmk:relent-condns-van}
The vanishing axiom cannot be dropped from Theorem~\ref{thm:relent}.
Indeed, the quantity
$
\sum_{i \csuch p_i > 0} p_i \log \frac{1}{r_i}
$
satisfies the other
three axioms but not vanishing.

\item
\label{rmk:relent-condns-chn}
In the literature on information functions, the chain rule is often
replaced by one of two superficially simpler rules.  The first is the
special case $k_1 = 2, k_2 = \cdots = k_n = 1$, which is
\begin{multline}
\label{eq:recursivity}
\Relent{(pw_1, (1 - p)w_1, w_2, \ldots, w_n)}%
{(\twid{p}\twid{w}_1, (1 - \twid{p})\twid{w}_1, \twid{w}_2, \ldots,
  \twid{w}_n)}
\\
=
\Relent{\vc{w}}{\twid{\vc{w}}}
+
w_1 \Relent{(p, 1 - p)}{(\twid{p}, 1 - \twid{p})}
\end{multline}
($(\vc{w}, \twid{\vc{w}}) \in A_n$, $((p, 1 - p), (\twid{p}, 1 - \twid{p}))
\in A_2$).  This is known as \demph{recursivity}.  The second is the
special case $n = 2$ of the chain rule, which is
\begin{multline}
\label{eq:other-special-case}
\Relent{w\p \oplus (1 - w)\vc{r}}%
{\twid{w}\twid{\p} \oplus (1 - \twid{w})\twid{\vc{r}}}
\\
=
\Relent{(w, 1 - w)}{(\twid{w}, 1 - \twid{w})}
+ w \relent{\p}{\twid{\p}} 
+ (1 - w) \relent{\vc{r}}{\twid{\vc{r}}},
\end{multline}
where
\[
w\p \oplus (1 - w)\vc{r}
=
(wp_1, \ldots, wp_k, (1 - w)r_1, \ldots, (1 - w)r_\ell)
\]
and $( (w, 1 - w), (\twid{w}, 1 - \twid{w}) ) \in A_2$, $(\p, \twid{\p})
\in A_k$, $(\vc{r}, \twid{\vc{r}}) \in A_\ell$.  However, straightforward
inductions (similar to those in Feinstein~\cite{Fein}, p.~5--6) show that
in the presence of the symmetry axiom, either one of the special
cases~\eqref{eq:recursivity} or~\eqref{eq:other-special-case} is
equivalent to the full chain rule~\eqref{eq:relent-chn}.  Which to use is,
therefore, simply a matter of taste.
\end{enumerate}
\end{remarks}

\begin{remark}
\label{rmk:relent-history}
Here we compare Theorem~\ref{thm:relent} with some earlier
characterizations of relative entropy.  One of the first such theorems was
that of Hobson~\cite{Hobs}, who used stronger hypotheses for the same
conclusion.  In common with Theorem~\ref{thm:relent}, he assumed symmetry,
vanishing, and the chain rule (in the equivalent
form~\eqref{eq:other-special-case}).  But he also assumed continuity in
both variables (instead of measurability in one) and a
monotonicity hypothesis unlike anything in Theorem~\ref{thm:relent}.

In 1973, Kannappan and Ng~\cite{KaNgMSF} proved a result very close to
Theorem~\ref{thm:relent}.  They did not \emph{state} that result
in~\cite{KaNgMSF}, but the closing remarks in another paper by the same
authors~\cite{KaNgFEC} and the approach of a contemporaneous paper by
Kannappan and Rathie~\cite{KaRa} strongly suggest the intent.  The
result was stated explicitly by
Csisz\'ar (\cite{Csis}, Section~2.1), who attributed it to
Kannappan and Ng.

There are some slight differences of hypotheses between Kannappan and Ng's
theorem and Theorem~\ref{thm:relent}.  They assumed measurability in both
variables, whereas we only assumed measurability in the second.  (In fact,
all we used was that $\melent{(1, 0)}{-}$ is measurable.)  On the other
hand, they only needed the vanishing condition for $(1/2, 1/2)$, whereas we
needed it for all $\p$.  They used the chain rule in the equivalent
form~\eqref{eq:recursivity}.  And as indicated in the Introduction, the
proofs are entirely different.
\end{remark}

\section{$q$-logarithmic entropy}

Let $q \in \R$.  The definition of $q$-logarithm in the Introduction
gives, explicitly,  
\[
\ln_q(x) 
=
\frac{1}{1 - q} (x^{1 - q} - 1)
\]
for $x \in (0, \infty)$ and $q \neq 1$, while $\ln_1$ is the natural
logarithm $\log$.  Hence, explicitly, the $q$-logarithmic entropy is given
by
\[
S_q(\p) 
=
\frac{1}{1 - q} \Biggl( \sum_{i \csuch p_i > 0} p_i^q - 1 \Biggr)
\]
for $\p \in \Delta_n$ and $q \neq 1$, while $S_1$ is the Shannon entropy
$H$.  We have $\ln_q(x) \to \log(x)$ as $q \to 1$, hence also $S_q(\p) \to
H(\p)$ as $q \to 1$.

Fix $q \in \R$. The $q$-logarithmic entropy satisfies a chain rule
\begin{equation}
\label{eq:q-chn}
S_q\bigl( \vc{w} \of (\p^1, \ldots, \p^n) \bigr)
=
S_q(\vc{w}) + \sum_{i \csuch w_i > 0} w_i^q S_q(\p^i)
\end{equation}
($\w \in \Delta_n$, $\p^1 \in \Delta_{k_1}, \ldots, \p^n \in
\Delta_{k_n}$), as is easily checked.  In particular, this holds when $\p^1
= \cdots = \p^n = \p$, say.  For $\w \in \Delta_n$ and $\p \in \Delta_k$,
write
\begin{align*}
\vc{w} \otimes \p       &
=
\vc{w} \of (\p, \ldots, \p)     \\
&
=
(w_1 p_1, \ldots, w_1 p_k,
\ \ldots, \ 
w_n p_1, \ldots, w_n p_k)       
\in \Delta_{nk}.
\end{align*}
In this case, the $q$-chain rule~\eqref{eq:q-chn} gives a
\demph{$q$-multiplicativity} property:
\begin{equation}
\label{eq:q-mult}
S_q(\vc{w} \otimes \p)
=
S_q(\vc{w}) + 
\Biggl( \sum_{i \csuch w_i > 0} w_i^q \Biggr) S_q(\p)
\end{equation}
($n, k \geq 1$, $\vc{w} \in \Delta_n$, $\p \in \Delta_k$).  

Note also that $S_q$ is symmetric in its arguments:
\begin{equation}
\label{eq:q-sym}
S_q(\p)
=
S_q(\p\sigma)
\end{equation}
for all $\p \in \Delta_n$ and permutations $\sigma$ of
$\{1, \ldots, n\}$.  


The left-hand side of equation~\eqref{eq:q-mult} is symmetric in $\vc{w}$
and $\p$, but the right-hand side is not obviously so.  This is the key to
our second theorem.

\begin{thm}
\label{thm:q}
Let $1 \neq q \in \R$ and let $(I \from \Delta_n \to \R)_{n \geq 1}$ be a
sequence of functions.  The following are equivalent:
\begin{enumerate}
\item 
\label{part:q-condns}
$I$ has the $q$-multiplicativity property~\eqref{eq:q-mult} and the
symmetry property~\eqref{eq:q-sym} (both with $I$ in place of $S_q$);

\item
\label{part:q-form}
$I = cS_q$ for some $c \in \R$.
\end{enumerate}
\end{thm}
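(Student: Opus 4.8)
The plan is to exploit the asymmetry flagged just before the statement: the left-hand side of~\eqref{eq:q-mult} is visibly symmetric in its two arguments while the right-hand side is not. The implication from~\bref{part:q-form} to~\bref{part:q-condns} is the routine verification already indicated around~\eqref{eq:q-mult} and~\eqref{eq:q-sym}, so the work is all in the converse. Assume, then, that $I$ satisfies $q$-multiplicativity and symmetry. Put $\phi(\p) = \sum_{i \csuch p_i > 0} p_i^q$, so that $S_q(\p) = \tfrac{1}{1-q}\bigl(\phi(\p) - 1\bigr)$ and the $q$-multiplicativity of $I$ reads $I(\vc{w} \otimes \p) = I(\vc{w}) + \phi(\vc{w}) I(\p)$. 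The distributions $\vc{w} \otimes \p$ and $\p \otimes \vc{w}$ have the same entries in a different order, so by symmetry $I(\vc{w} \otimes \p) = I(\p \otimes \vc{w})$; hence
\[
I(\vc{w}) + \phi(\vc{w}) I(\p) = I(\p) + \phi(\p) I(\vc{w}),
\]
which rearranges to
\[
I(\vc{w})\bigl(1 - \phi(\p)\bigr) = I(\p)\bigl(1 - \phi(\vc{w})\bigr)
\]
for all probability distributions $\vc{w}$ and $\p$, of whatever sizes.

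The second ingredient is an elementary description of the zero set of $1 - \phi$: for $q \neq 1$, one has $\phi(\p) = 1$ if and only if $\p$ is deterministic (has a single nonzero entry, necessarily equal to $1$). When $q = 0$ this is immediate, since then $\phi(\p)$ is the cardinality of the support of $\p$; when $q \notin \{0, 1\}$ it follows from elementary properties of $t \mapsto t^q$ on $[0,1]$ (namely $t^q \leq t$ for $q > 1$, $t^q \geq t$ for $0 < q < 1$, and $t^q > 1$ for $t \in (0,1)$ when $q < 0$, each with the appropriate strictness), which force $\sum_i p_i^q \neq \sum_i p_i = 1$ unless every $p_i$ lies in $\{0,1\}$. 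In particular $1 - \phi(\p) \neq 0$ for every non-deterministic $\p$, and non-deterministic distributions certainly exist.

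With these in hand: if $\vc{w}$ and $\p$ are both non-deterministic, the displayed identity gives $I(\p)/\bigl(1 - \phi(\p)\bigr) = I(\vc{w})/\bigl(1 - \phi(\vc{w})\bigr)$, so this ratio takes a constant value $-c'$, say, across all non-deterministic distributions; thus $I(\p) = c'\bigl(\phi(\p) - 1\bigr)$ for every non-deterministic $\p$. For a deterministic $\p$, taking any non-deterministic $\vc{w}$ in the identity gives $0 = I(\p)\bigl(1 - \phi(\vc{w})\bigr)$, whence $I(\p) = 0 = c'\bigl(\phi(\p) - 1\bigr)$ as well. Therefore $I(\p) = c'\bigl(\phi(\p) - 1\bigr) = c'(1-q)\, S_q(\p)$ for all $\p$, and~\bref{part:q-form} holds with $c = c'(1 - q)$.

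I do not anticipate a real obstacle: once the symmetry manipulation produces the identity $I(\vc{w})(1 - \phi(\p)) = I(\p)(1 - \phi(\vc{w}))$, everything else is forced. The only point needing a little care is the lemma pinning down when $\phi(\p) = 1$, where one must split off $q = 0$ and keep track of the sign of $q$; and it is worth remarking, as promised in the introduction, that no regularity hypothesis on $I$ is invoked at any stage.
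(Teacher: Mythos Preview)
Your proof is correct and follows essentially the same route as the paper: both derive the identity $I(\vc{w})\bigl(\phi(\p) - 1\bigr) = I(\p)\bigl(\phi(\vc{w}) - 1\bigr)$ from symmetry and $q$-multiplicativity. The paper's finish is slightly quicker: rather than characterizing the full zero set of $1 - \phi$, it simply specializes to $\vc{w} = (1/2, 1/2)$, where $\phi(\vc{w}) - 1 = 2^{1-q} - 1 \neq 0$ since $q \neq 1$, and reads off the constant $c$ directly; this avoids your case analysis on $q$ while reaching the same conclusion.
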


\begin{proof}
By the observations just made, \bref{part:q-form}
implies~\bref{part:q-condns}.  Now assume~\bref{part:q-condns}.  By
symmetry, $I(\vc{w} \otimes \vc{p}) = I(\vc{p} \otimes \vc{w})$, so
\[
I(\vc{w}) + \Biggl( \sum_{i \csuch w_i > 0} w_i^q \Biggr) I(\vc{p})
=
I(\vc{p}) + \Biggl( \sum_{i \csuch p_i > 0} p_i^q \Biggr) I(\vc{w}),
\]
or equivalently
\[
\Biggl( \sum_{i \csuch w_i > 0} w_i^q - 1 \Biggr) I(\vc{p})
=
\Biggl( \sum_{i \csuch p_i > 0} p_i^q - 1 \Biggr) I(\vc{w}),
\]
for all $\vc{w} \in \Delta_n$ and $\vc{p} \in \Delta_k$.  Take $\vc{w} =
(1/2, 1/2)$: then for all $\p \in \Delta_k$,
\[
\bigl(2^{1 - q} - 1\bigr) I(\p)
=
\Biggl( \sum_{i \csuch p_i > 0} p_i^q - 1 \Biggr) I(1/2, 1/2).
\]
Since $q \neq 1$, we can define $c =
\tfrac{1 - q}{2^{1 - q} - 1} \cdot I(1/2, 1/2)$, and then $I = cS_q$.
\end{proof}

\begin{remarks}
\label{rmks:q-history}
\begin{enumerate}
\item
\label{rmk:q-history-log}
The $q$-logarithms were used in Hardy, Littlewood and P\'olya's classic
book on inequalities, first published in 1934 (\cite{HLP}, proof of
Theorem~84).  They have been an explicit object of study since at
least a 1964 paper of Box and Cox in statistics (\cite{BoCo}, Section~3).
The name `$q$-logarithm' appears to have been introduced by Umarov, Tsallis
and Steinberg in 2008~\cite{UTS}, working in statistical mechanics.

\item
\label{rmk:q-history-ent}
The $q$-logarithmic entropies have been discovered and rediscovered
repeatedly.  To my knowledge, they first appeared in a 1967 paper on
information and classification by Havrda and Charv\'at~\cite{HaCh}, who
used a form adapted to base~$2$ logarithms.  They were rediscovered in 1970
by Dar\'oczy~\cite{DaroGIF}.  The base~$e$ version $S_q$ seems to have
first appeared in a 1982 article of Patil and Taillie (\cite{PaTaDCM},
Section~3.2), where it was studied as an index of biodiversity.

In physics, meanwhile, the $q$-logarithmic entropies appeared in a 1971
article of Lindhard and Nielsen~\cite{LiNi} (according to
Csiszar~\cite{Csis}, Section~2.4), and in a 1978 survey by Wehrl
(\cite{Wehr}, p.~247).  Finally, they were rediscovered again in a 1988
paper on statistical physics by Tsallis~\cite{TsalPGB}.

Despite the twenty years of active life that the $q$-logarithmic entropies
had already enjoyed, it is after Tsallis that they are most commonly named.
The term `$q$-logarithmic entropy' is new, but has the benefits of being
descriptive and of not perpetuating a misattribution.

\item
\label{rmk:q-history-chn}
As in Remark~\ref{rmks:relent-condns}\bref{rmk:relent-condns-chn} or
Feinstein~\cite{Fein} (p.~5--6), a simple
inductive argument shows that the $q$-chain rule of
equation~\eqref{eq:q-chn} follows from the special case
\begin{equation}
\label{eq:q-chn-simp}
S_q\bigl(pw_1, (1 - p)w_1, w_2, \ldots, w_n\bigr)
=
S_q(\vc{w}) + w_1^q S_q(p, 1 - p)
\end{equation}
($p \in [0, 1]$, $n \geq 1$, $\vc{w} \in \Delta_n$).

\item
\label{rmk:q-history-char}
A characterization of the $q$-logarithmic entropies similar to
Theorem~\ref{thm:q} was published by Dar\'oczy in 1970~\cite{DaroGIF}.  He
assumed the full $q$-chain rule for $I(\vc{w} \of (\p^1, \ldots, \p^n))$
(in the equivalent form~\eqref{eq:q-chn-simp}), rather than just the
special case of $I(\vc{w} \otimes \p)$ that we used.  However, where we
assumed that $I \from \Delta_n \to \R$ is symmetric for all $n \geq 2$,
Dar\'oczy only assumed it for $n = 3$.  The two proofs are very different;
the main step in Dar\'oczy's was the solution of the functional
equation~\eqref{eq:feith-gen} in the case $f = g = h = k$.

Other characterizations of $S_q$ have been proved, using stronger
hypotheses than Theorem~\ref{thm:q} to obtain the same conclusion
(such as the theorem in Section~2 of~\cite{Suya}, and Theorem~V.2
of~\cite{Furu}).
\end{enumerate}
\end{remarks}

\section{$q$-logarithmic relative entropy}

For $q \neq 1$, the $q$-logarithmic relative entropy $S_q \from A_n \to
\R$, defined in the Introduction, is given explicitly by
\[
\srelent{q}{\p}{\vc{r}}
=
\frac{1}{q - 1} \Biggl( 
\sum_{i \csuch p_i > 0} p_i^q r_i^{1 - q} - 1 
\Biggr),
\]
for $(\p, \vc{r}) \in A_n$ and $q \neq 1$.  In the case $q = 1$, it reduces
to the ordinary relative entropy $\relent{\p}{\vc{r}}$.  As in that case,
restricting the arguments to lie in $A_n$ guarantees that
$\srelent{q}{-}{-}$ takes only finite values.

Our third and final theorem is a characterization of $q$-logarithmic
relative entropy, very similar to the characterization of $q$-logarithmic
entropy itself.

We begin by noting two properties of $q$-logarithmic relative entropy.
First, there is an easily-checked chain rule:
%
\[
\Srelent{q}{\vc{w} \of (\p^1, \ldots, \p^n)}%
{\twid{\vc{w}} \of (\twid{\p}^1, \ldots, \twid{\p}^n)}
=
\srelent{q}{\vc{w}}{\twid{\vc{w}}}
+
\sum_{i \csuch w_i > 0} w_i^q \twid{w}_i^{1 - q}
\srelent{q}{\p^i}{\twid{\p}^i}
\]
%
($(\vc{w}, \twid{\vc{w}}) \in A_n$, $(\p^i, \twid{\p}^i) \in
A_{k_i}$).  This specializes to a $q$-multiplicativity formula
\begin{equation}
\label{eq:q-rel-mult}
\srelent{q}{\vc{w} \otimes \p}{\twid{\vc{w}} \otimes \twid{\p}}    
=
\srelent{q}{\vc{w}}{\twid{\vc{w}}}
+
\Biggl( \sum_{i \csuch w_i > 0} w_i^q \twid{w}_i^{1 - q} \Biggr)
\srelent{q}{\p}{\twid{\p}}
\end{equation}
($(\vc{w}, \twid{\vc{w}}) \in A_n$, $(\p, \twid{\p}) \in A_k$).  Second,
$q$-logarithmic relative entropy has the same symmetry property as ordinary
relative entropy:
\begin{equation}
\label{eq:q-rel-sym}
\srelent{q}{\p}{\vc{r}}
=
\srelent{q}{\p\sigma}{\vc{r}\sigma}
\end{equation}
for all $n \geq 1$, $(\p, \vc{r}) \in A_n$, and permutations $\sigma$ of
$\{1, \ldots, n\}$.  


\begin{thm}
\label{thm:q-rel}
Let $1 \neq q \in \R$ and let $\bigl( \melent{-}{-} \from A_n \to \R
\bigr)_{n \geq 1}$ be a sequence of functions.  The following are
equivalent:
\begin{enumerate}
\item
\label{part:q-rel-condns}
$\melent{-}{-}$ has the $q$-multiplicativity property~\eqref{eq:q-rel-mult}
and the symmetry property~\eqref{eq:q-rel-sym} (both with $I$ in place of
$S_q$);

\item
\label{part:q-rel-form}
$\melent{-}{-} = c\srelent{q}{-}{-}$ for some $c \in \R$.
\end{enumerate}
\end{thm}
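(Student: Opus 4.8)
The plan is to transcribe the proof of Theorem~\ref{thm:q} almost verbatim, replacing $\sum_i p_i^q$ throughout by $\sum_i p_i^q \twid{p}_i^{1-q}$. We have already noted that~\bref{part:q-rel-form} implies~\bref{part:q-rel-condns}, so assume~\bref{part:q-rel-condns}. As for $q$-logarithmic entropy, the engine is that the left-hand side of the $q$-multiplicativity formula~\eqref{eq:q-rel-mult} is symmetric under interchanging the pair $(\vc{w}, \twid{\vc{w}})$ with the pair $(\p, \twid{\p})$, whereas the right-hand side is not visibly so.

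First I would make that interchange precise. For $\vc{w}, \twid{\vc{w}} \in \Delta_n$ and $\p, \twid{\p} \in \Delta_k$, a single permutation $\sigma$ of $\{1, \ldots, nk\}$ --- the transpose permutation of an $n \times k$ array, $(i - 1)k + j \mapsto (j - 1)n + i$ --- carries $\vc{w} \otimes \p$ to $\p \otimes \vc{w}$ and $\twid{\vc{w}} \otimes \twid{\p}$ to $\twid{\p} \otimes \twid{\vc{w}}$; moreover $(\p \otimes \vc{w}, \twid{\p} \otimes \twid{\vc{w}}) \in A_{kn}$ whenever $(\vc{w}, \twid{\vc{w}}) \in A_n$ and $(\p, \twid{\p}) \in A_k$. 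Hence the symmetry property~\eqref{eq:q-rel-sym} gives $\melent{\vc{w} \otimes \p}{\twid{\vc{w}} \otimes \twid{\p}} = \melent{\p \otimes \vc{w}}{\twid{\p} \otimes \twid{\vc{w}}}$, and expanding both sides by~\eqref{eq:q-rel-mult} (with $\melent{-}{-}$ in place of $S_q$) and cancelling the common terms yields
\[
\Biggl( \sum_{i \csuch w_i > 0} w_i^q \twid{w}_i^{1 - q} - 1 \Biggr) \melent{\p}{\twid{\p}}
=
\Biggl( \sum_{i \csuch p_i > 0} p_i^q \twid{p}_i^{1 - q} - 1 \Biggr) \melent{\vc{w}}{\twid{\vc{w}}}
\]
for all $(\vc{w}, \twid{\vc{w}}) \in A_n$ and $(\p, \twid{\p}) \in A_k$, which is the relative-entropy analogue of the central identity in the proof of Theorem~\ref{thm:q}.

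It then remains to specialize $(\vc{w}, \twid{\vc{w}})$ so that the bracketed factor on the left is nonzero, and this is the one spot needing a moment's thought: the choice $\bigl((1/2, 1/2), (1/2, 1/2)\bigr)$ used in Theorem~\ref{thm:q} no longer works, since $\sum_i (1/2)^q (1/2)^{1 - q} - 1 = 0$. Instead I would take $(\vc{w}, \twid{\vc{w}}) = \bigl((1, 0), (1/2, 1/2)\bigr)$, which lies in $A_2$ because $\twid{\vc{w}}$ has full support, so the constraint defining $A_2$ is vacuous; then $\sum_{i \csuch w_i > 0} w_i^q \twid{w}_i^{1 - q} - 1 = 2^{q - 1} - 1 \neq 0$, using $q \neq 1$. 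Since $\sum_{i \csuch p_i > 0} p_i^q \twid{p}_i^{1 - q} - 1 = (q - 1)\srelent{q}{\p}{\twid{\p}}$ by the explicit formula for $S_q$, the displayed identity becomes $(2^{q - 1} - 1)\,\melent{\p}{\twid{\p}} = (q - 1)\,\srelent{q}{\p}{\twid{\p}}\,\melent{(1, 0)}{(1/2, 1/2)}$, so $\melent{-}{-} = c\,\srelent{q}{-}{-}$ on every $A_k$ with $c = \tfrac{q - 1}{2^{q - 1} - 1}\,\melent{(1, 0)}{(1/2, 1/2)}$, establishing~\bref{part:q-rel-form}. Exactly as in Theorem~\ref{thm:q}, the argument uses no regularity hypotheses whatsoever.
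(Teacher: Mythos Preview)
Your proof is correct and follows essentially the same route as the paper's: use symmetry to equate $\melent{\vc{w}\otimes\p}{\twid{\vc{w}}\otimes\twid{\p}}$ with $\melent{\p\otimes\vc{w}}{\twid{\p}\otimes\twid{\vc{w}}}$, expand both sides via $q$-multiplicativity to obtain the central identity, and then specialize to $(\vc{w},\twid{\vc{w}}) = \bigl((1,0),(1/2,1/2)\bigr)$ so that the coefficient $2^{q-1}-1$ is nonzero. Your write-up adds a little more detail (the explicit transpose permutation, and the remark that $\bigl((1/2,1/2),(1/2,1/2)\bigr)$ would fail here), but the argument is the same.
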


\begin{proof}
It is trivial that~\bref{part:q-rel-form} implies~\bref{part:q-rel-condns}.
Now assume~\bref{part:q-rel-condns}.  By symmetry,
\[
\melent{\vc{w} \otimes \vc{p}}{\twid{\vc{w}} \otimes \twid{\vc{p}}}
=
\melent{\vc{p} \otimes \vc{w}}{\twid{\vc{p}} \otimes \twid{\vc{w}}}
\]
for all $n, k \geq 1$, $(\vc{w}, \twid{\vc{w}}) \in A_n$, and $(\vc{p},
\twid{\vc{p}}) \in A_k$.  So by $q$-multiplicativity,
\[
\melent{\vc{w}}{\twid{\vc{w}}}
+ 
\Biggl( \sum_{i \csuch w_i > 0} w_i^q \twid{w}_i^{1 - q} \Biggr)
\melent{\vc{p}}{\twid{\vc{p}}}
=
\melent{\vc{p}}{\twid{\vc{p}}}
+ 
\Biggl( \sum_{i \csuch p_i > 0} p_i^q \twid{p}_i^{1 - q} \Biggr)
\melent{\vc{w}}{\twid{\vc{w}}},
\]
or equivalently,
\[
\Biggl( \sum_{i \csuch w_i > 0} w_i^q \twid{w}_i^{1 - q} - 1 \Biggr) 
\melent{\vc{p}}{\twid{\vc{p}}}.
=
\Biggl( \sum_{i \csuch p_i > 0} p_i^q \twid{p}_i^{1 - q} - 1 \Biggr) 
\melent{\vc{w}}{\twid{\vc{w}}}.
\]
Take $\vc{w} = (1, 0)$ and $\twid{\vc{w}} = (1/2, 1/2)$: then
\[
(2^{q - 1} - 1) \melent{\vc{p}}{\twid{\vc{p}}}
=
\Biggl( \sum_{i \csuch p_i > 0} p_i^q \twid{p}_i^{1 - q} - 1 \Biggr)
\Melent{(1, 0)}{(1/2, 1/2)} 
\]
for all $(\p, \twid{\p}) \in A_k$.  But $q \neq 1$, so we can define 
\[
c = \frac{1 - q}{2^{q - 1} - 1} \cdot \Melent{(1, 0)}{(1/2, 1/2)},
\]
and then $\melent{-}{-} = c \srelent{q}{-}{-}$.
\end{proof}

\begin{remarks}
Other characterization theorems for $q$-logarithmic relative entropy have
been proved.  For example, Furuichi (\cite{Furu}, Section~IV) obtained the
same conclusion, but also assumed continuity and essentially
the full chain rule (that is, an equivalent special case, as in
Remarks~\ref{rmks:relent-condns}\bref{rmk:relent-condns-chn}
and~\ref{rmks:q-history}\bref{rmk:q-history-chn}).
\end{remarks}

\paragraph*{Acknowledgements} 
I thank John Baez and Tobias Fritz for their comments.  This work was
partially supported by a BBSRC FLIP award (BB/P004210/1).

\bibliography{mathrefs}

\end{document}